\newtheorem{te}{Theorem}
\newtheorem{de}{Definition}
\newtheorem{ex}{Example}
\providecommand{\keywords}[1]
{\small	\textbf{\textit{Keywords:}} #1}
\author{R. Azuaje and A. M. Escobar-Ruiz\\
Departamento de F\'{i}sica, Universidad Aut\'onoma Metropolitana Unidad Iztapalapa,\\ San Rafael Atlixco 186, 09340, Ciudad de M\'exico, M\'exico}
\title{Canonical transformations: from the coordinate based approach to the geometric one}
\begin{document}
\maketitle

\begin{abstract}

In this paper the theory of time-dependent and time-independent canonical transformations is considered from a geometric perspective. Both the
geometric formalism and the coordinate based approach are described in detail. In particular, one-parameter groups of canonical transformations are geometrically identified with flows of Hamiltonian vector fields which, in turn, are their infinitesimal generators. Likewise, infinitesimal generators of invariance transformations are geometrically characterized. The main results are established in the form of theorems and the connection between the geometric and the coordinate based frameworks is remarked using concrete examples. 

\keywords{Canonical transformations; Symplectic geometry; Time-dependent formalism; Infinitesimal symmetries; Invariance transformations.}

\end{abstract}

\section{Introduction}

In the theory of Classical Hamiltonian Mechanics, the canonical transformations play a fundamental role \cite{landau1982mechanics, GPS2002,Calkin96} due to many reasons. For instance, they preserve the form of the Hamilton's equations of motion (invariance), deprives the generalized coordinates $q$ and canonical momenta $p$ of a significant
part of their original meaning, and more importantly, they can be used to simplify the task of solving the equations of motion. In particular, they are instrumental to exploit the existence of conserved quantities, and represent the underlying central object within the formalism of Hamilton-Jacobi theory. Interestingly, the time evolution of the dynamical canonical variables $q(t), p(t)$ during the
motion can be viewed itself as a series of canonical transformations \cite{landau1982mechanics}. Frequently, in practice, the canonical transformations are defined and used in what we can call the coordinate based approach. 

The elegant free-coordinate geometric approach to Classical Mechanics has a long history as well, starting with the pioneering works of Arnold \cite{Arnold78}, and Abraham and Marsden \cite{AMRC2008}. Nowadays it is an intense active research area. Needless to say that the natural geometric framework to develop the theory of Classical time-independent Hamiltonian Mechanics is the symplectic geometry \cite{AMRC2008,LR89,Torres2020,Lee2012,marsden2013introduction,AKN2006}. Under this formalism, the canonical transformations have been introduced as transformations preserving the symplectic structure of the corresponding phase space \cite{AMRC2008,Lee2012}. A more general definition of canonical transformation, including those preserving the symplectic structure up to a conformal constant factor, is given in \cite{CFR2013}. However, to the best of the knowledge of the present authors, a geometric description of time-dependent canonical transformations has not been addressed in the literature. 

The main goal of this paper is to present a unified \textit{coordinate-geometric} description of canonical transformations. From one side, the coordinate approach of canonical transformations is widely known even for undergraduate students of any physics program. On the other hand, the geometric theory of canonical transformations has been much less studied and, as we will see, some important aspects that have not been analyzed before occur. It is surprising that in the literature a clear explanation on the connection between the coordinate and geometric approaches is missing. Specifically, in spite of the fact that the relation between invariance and conserved quantities is well understood (in both the Lagrangian and the Hamiltonian formulations), and the relation of conserved quantities with symmetries is established by the celebrated Noether theorem, a natural question still arises \textit{What is the connection between the notion of invariance (coordinate approach) and the symmetries described by vector fields (geometric approach)?}. Therefore, to answer this question we aim to elaborate the aforementioned coordinate-geometric description of canonical transformations which, in turn,  allows us to geometrically visualize one-parameter groups of canonical transformations as flows of Hamiltonian vector fields. This visualization embodies one of the main results of the present consideration. In addition to the rigorous mathematical treatment, practical aspects of this study  are highlighted as well.

This paper is organized as follows. In section \ref{seccanonicalclassical}, a brief review of the classical description of canonical transformations regarded as coordinate transformations is presented. The concepts of infinitesimal transformations and invariance transformations are also reviewed. In section \ref{seccanonicalgeo}, we cover the theory of canonical transformation from a pure geometrical point of view. This section is divided into five subsections. The geometric formalisms of time-independent and time-dependent Hamiltonian Mechanics are briefly resumed in the first one. Next, the canonical transformations are presented for time-independent and time-dependable Hamiltonian systems in the second and third subsections, respectively. Especially, it is shown in a clear and systematic way how the modern geometric definition of canonical transformation coincides locally with the classical coordinate description. In the subsection \ref{subseconeparameter}, the concept of one-parametric group of canonical transformations is revisited within the geometric formalism. Afterwards, in section \ref{subsecsymmetries}, the notion of invariance is introduced from a pure geometric perspective and it is related to the symmetries described by vector fields. This represents one of the most important result of our modest study. Throughout the paper, the novel ideas, key concepts and developments are indicated and explained in detail. Finally, in section \ref{secexample}, further additional illustrating examples are presented.

\section{Canonical transformations: the coordinate-based viewpoint}
\label{seccanonicalclassical}

Let us consider a Hamiltonian system with $n$ degrees of freedom whose equations of motion are given in a set of canonical coordinates $(q,p)=(q^{1},\ldots,q^{n},p_{1},\ldots,p_{n})$ on phase space with a possibly time-dependent Hamiltonian function $H=H(q,p,t)$, i.e., the equations of motion are given by
\begin{equation}
\left\lbrace \begin{array}{c}
\dot{q}^{1} = \frac{\partial H}{\partial p_{1}},\\ 
\vdots \\
\dot{q}^{n} = \frac{\partial H}{\partial p_{n}},\\ \\
\dot{p}_{1} = -\frac{\partial H}{\partial q^{1}},\\
\vdots\\
\dot{p}_{n} = -\frac{\partial H}{\partial q^{n}} \ .
\end{array} \right.
\end{equation}
The Poisson bracket of two possibly time-dependent functions $f=f(q,p,t)$ and $g=g(q,p,t)$ on the canonical coordinates $(q,p)$ take the form
\begin{equation}
\lbrace f,g\rbrace_{q,p} \ = \  \frac{\partial f}{\partial q^{i}}\frac{\partial g}{\partial p_{i}}\,-\,\frac{\partial f}{\partial p_{i}}\frac{\partial g}{\partial q^{i}} \ .
\end{equation}
In this section it is presented a brief review of the theory of canonical transformations from the coordinate approach as described in classical textbooks on Classical Mechanics like \cite{Calkin96,GPS2002,landau1982mechanics}

\subsection{Canonical transformations and generating functions}
\begin{de}
A coordinate transformation $(q,p)\mapsto (Q(q,p,t),P(q,p,t))$ is a canonical transformation if the new set of coordinates $(Q,P)$ forms a set of canonical coordinates.
\end{de}
It is well known that a set of coordinates $(Q^{1},\ldots,Q^{n},P_{1},\ldots,P_{n})$ forms a set of canonical coordinates if and only if
\begin{equation}
\lbrace Q^{i},P_{j}\rbrace_{q,p} = \delta^{i}_{j} \quad\text{and}\quad \lbrace Q^{i},Q_{j}\rbrace_{q,p}=\lbrace P^{i},P_{j}\rbrace_{q,p}=0 \ , \quad i,j=1,2,\ldots,n \ ,
\end{equation}
where $\delta^{i}_{j}$ is the Kronecker delta, which implies that canonical transformations are independent of the Hamiltonian function. Canonical transformations preserve the form of the Hamilton's equations of motion, i.e., if $(q,p)\mapsto (Q(q,p,t),P(q,p,t))$ is a canonical transformation then there exists a new Hamiltonian function $K=K(Q,P,t)$ such that 
\begin{equation}
\left\lbrace \begin{array}{c}
\dot{Q}^{1} = \frac{\partial K}{\partial P_{1}},\\ 
\vdots \\
\dot{Q}^{n} = \frac{\partial K}{\partial P_{n}},\\ \\
\dot{P}_{1} = -\frac{\partial K}{\partial Q^{1}},\\
\vdots\\
\dot{P}_{n} = -\frac{\partial K}{\partial Q^{n}} \ .
\end{array} \right.
\end{equation}

Given a canonical transformation $(q,p)\mapsto (Q,P)$ there exists a possibly time-dependent function $F$, defined on phase space, which obeys
\begin{equation}
p_{i}dq^{i}-Hdt-(P_{i}dQ^{i}-Kdt)\ = \ dF \ .
\label{Fdef}
\end{equation}
$F$ is called the generating function of the canonical transformation. Hence, in this case we have a set of $4n$ coordinates $(q^{1},\ldots,q^{n},p_{1},\ldots,p_{n},Q^{1},\ldots,Q^{n},P_{1},\ldots,P_{n})$ in phase space; $2n$ coordinates are independent and the other $2n$ are dependent. Accordingly, it is convenient to consider canonical transformations consisting of $n$ old coordinates plus $n$ new coordinates all of them functionally independent. There are four basic types of canonical transformations (although mixtures of these can appear) described by sets of the form $(q,Q)$, $(q,P)$, $(p,Q)$ or $(p,P)$, respectively. If the coordinates $(q,Q)$ are independent, the function $F$ can be written as $F=F_{1}(q,Q,t)$ satisfying the condition $\det(\frac{\partial^{2}F_{1}}{\partial q^{i}\partial Q^{j}})\neq 0$, and from (\ref{Fdef}) we also have the following relations
\begin{equation}
p_{i}=\frac{\partial F_{1}}{\partial q^{i}}, \quad P_{i}=-\frac{\partial F_{1}}{\partial Q^{i}}\quad\text{and}\quad K-H=\frac{\partial F_{1}}{\partial t}.
\end{equation}
In this case it is said that $F=F_{1}(q,Q,t)$ is a generating function of type 1. For the identity transformation $(q=Q,p=P)$, which will play an important role in the next Sections, the coordinates $(q,Q)$ are dependent. Thus, in this particular case we can choose the variables $(q,P)$ as independent coordinates and construct, by means of the appropriate Legendre transformation, the generating function $F_{2}(q,P)=F_{1}+Q^{i}P_{i}$. From (\ref{Fdef}) we obtain
\begin{equation}
p_{i}dq^{i}-Hdt+Q^{i}dP_{i}+Kdt=dF_{2} \ , 
\end{equation}
then
\begin{equation}
p_{i}=\frac{\partial F_{2}}{\partial q^{i}}, \quad Q^{i}=\frac{\partial F_{2}}{\partial P_{i}}\quad\text{and}\quad K-H=\frac{\partial F_{2}}{\partial t}.
\end{equation}
In this case it is said that $F_{2}(q,P,t)$ is a generating function of type 2. In particular, the generating function 
\begin{equation}
F_{Id}(q,P) \ \equiv \ q^{i}\,P_{i} \ ,  
\end{equation}
yields the identity canonical transformation.

\subsection{Infinitesimal canonical transformations}
\begin{de}
An infinitesimal canonical transformation is a transformation ¨near¨ the identity, i.e., a coordinate transformation of the form
\begin{equation}
\label{eqinfcan}
Q^{i}=q^{i}+\epsilon\, \Delta q^{i}(q,p,t), \quad P_{i}=p_{i}+\epsilon \,\Delta p_{i}(q,p,t)\ ,
\end{equation}
where $\epsilon$ is an infinitesimal parameter (the change from $(q,p)$ to $(Q,P)$ is ¨small¨) and, $\Delta q^{i}(q,p,t)$ and $\Delta p_{i}(q,p,t)$ are certain functions to be determined.
\end{de}
Infinitesimal canonical transformations are generated by functions ¨near¨ the generating function $F_{Id}(q,P)$ of the identity transformation. They can be written as follows
\begin{equation}
\label{F2iCT}
F_{2}(q,P,t)\ = \ q^{i}P_{i} \ + \ \epsilon\, G(q,P,t)\ ,
\end{equation}
here $G(q,P,t)$ is any function. Indeed, the canonical transformation generated (\ref{F2iCT}) reads
\begin{equation}
Q^{i}\ = \ q^{i}\ + \ \epsilon\,\frac{\partial G(q,P,t)}{\partial P_{i}} \quad{\rm and}\quad P_{i}\ = \ p_{i}\ - \ \epsilon\, \frac{\partial G(q,P,t)}{\partial q^{i}}\ .
\end{equation}
Evidently, at order $\epsilon$, one can substitute $P_{i}=p_{i}$ into $\epsilon\,\frac{\partial G(q,P,t)}{\partial q^{i}}$ and $\epsilon\,\frac{\partial G(q,P,t)}{\partial P_{i}}$. Hence, we obtain the infinitesimal canonical transformation explicitly 
\begin{equation}
\label{eqgeninfcan}
Q^{i}\ = \ q^{i}\ + \ \epsilon\,\frac{\partial G(q,p,t)}{\partial p_{i}} \quad{\rm and}\quad P_{i}\ = \ p_{i}\ - \ \epsilon\, \frac{\partial G(q,p,t)}{\partial q^{i}}\ .
\end{equation}
From (\ref{eqinfcan}) and (\ref{eqgeninfcan}),  we can make the identifications $\Delta q^{i}(q,p,t)=\frac{\partial G(q,p,t)}{\partial p_{i}}$ and $\Delta p_{i}=-\frac{\partial G(q,p,t)}{\partial q_{i}}$. In this case, the function $G=G(q,p,t)$ in (\ref{eqgeninfcan}) is called the generating function of the  infinitesimal canonical transformation.

\subsection{Invariance transformations and constants of motion}
\label{subsec2.3}
Given a canonical transformation $(q,p)\mapsto(Q(q,p,t),P(q,p,t))$ with generating function $F$ (of any type), the new Hamiltonian function $K$ is given by
\begin{equation}
K(q,p,t)=H(q,p,t)+\frac{\partial F}{\partial t}.
\end{equation}
For a time-independent canonical transformation the new Hamiltonian can be taken equal to the old one.
 
It is said that the Hamiltonian function is invariant under the canonical transformation, or that the canonical transformation leaves the Hamiltonian invariant, when the new Hamiltonian is the old Hamiltonian in the new coordinates, i.e., 
\begin{equation}
K(q,p,t)=H(Q,P,t),
\end{equation}
or equivalently,
\begin{equation}
H(Q,P,t)=H(q,p,t)+\frac{\partial F}{\partial t}.
\end{equation}
For a time-independent canonical transformation the Hamiltonian is invariant if and only if $H(Q,P,t)=H(q,p,t)$. 

Under the infinitesimal canonical transformation (\ref{eqgeninfcan}), the Hamiltonian is invariant if and only if
\begin{equation}
H(q^{i}+\epsilon\frac{\partial G(q,p,t)}{\partial p_{i}},p_{i}-\epsilon \frac{\partial G(q,p,t)}{\partial q^{i}},t)\ = \ H(q,p,t)\,+\,\epsilon\,\frac{\partial G}{\partial t}\ .
\end{equation}
By expanding $H(q^{i}+\epsilon\frac{\partial G(q,p,t)}{\partial p_{i}},p_{i}-\epsilon \frac{\partial G(q,p,t)}{\partial q^{i}},t)$ about the point $(q,p,t)$ we have
\begin{equation}
H(q,p,t)\ + \ \epsilon\,\frac{\partial H}{\partial q^{i}}\frac{\partial G}{\partial p_{i}}\ - \ \epsilon\frac{\partial H}{\partial p_{i}}\frac{\partial G}{\partial q^{i}} \ + \  {\cal O}(\epsilon^2) \ = \ H(q,p,t)\ +\ \epsilon\,\frac{\partial G}{\partial t}\ ,
\end{equation}
which by disregarding the higher terms ${\cal O}(\epsilon^2)$ in $\epsilon$ gives
\begin{equation}
\frac{\partial G}{\partial t}+\lbrace G,H\rbrace=0\ ,
\end{equation}
thus, $G$ is a constant of motion. In summary, the generating function of an explicit infinitesimal canonical transformation leaving the Hamiltonian invariant is a constant of motion and, reciprocally, the infinitesimal canonical transformation generated explicitly by a constant of motion leaves the Hamiltonian invariant.

\section{Canonical transformations: the geometrical viewpoint}
\label{seccanonicalgeo}

\subsection{The geometric formalism of Hamiltonian Mechanics}
Let us see a brief review of the geometric formulations of time-independent and time dependent Hamiltonian mechanics. The theory of time-independent conservative Hamiltonian systems is naturally constructed within the mathematical formalism of symplectic geometry (for details see \cite{AMRC2008,LR89,Torres2020,Lee2012,marsden2013introduction}). A symplectic manifold is a $2n$ dimensional smooth manifold $M$ equipped with a closed non-degenerate 2-form $\omega$ called a symplectic structure on $M$. Around any point $x\in M$ there exist local coordinates $(q^{1},\ldots,q^{n},p_{1},\ldots,p_{n})$, called canonical coordinates, such that
\begin{equation}
\omega=dq^{i}\wedge dp_{i}\ .
\end{equation}
In this paper we adopt the Einstein summation convention, namely, a summation over repeated indices is assumed.

The Hamiltonian vector field $X_{f}$ for $f\in C^{\infty}(M)$ is given by
\begin{equation}
X_{f}\lrcorner \omega \ = \ df \ .
\end{equation}
The Poisson bracket on $C^{\infty}(M)$ takes the form
\begin{equation}
\lbrace f,g\rbrace= X_{g}f.
\end{equation}
In canonical coordinates $(q^{1},\ldots,q^{n},p_{1},\ldots,p_{n})$, we have
\begin{equation}
X_{f}=\frac{\partial f}{\partial p_{i}}\frac{\partial}{\partial q^{i}}-\frac{\partial f}{\partial q^{i}}\frac{\partial}{\partial p_{i}}
\end{equation}
and
\begin{equation}
\lbrace f,g\rbrace \ = \ \frac{\partial f}{\partial q^{i}}\frac{\partial g}{\partial p_{i}}\,-\,\frac{\partial f}{\partial p_{i}}\frac{\partial g}{\partial q^{i}} \ .
\end{equation}

The Hamiltonian dynamics is defined by the Hamiltonian vector field $X_{H}$ for the given Hamiltonian function $H\in C^{\infty}(M)$ and the equations of motion are the well know Hamilton equations of motion. The evolution (the temporal evolution) of a function $f\in C^{\infty}(M)$ (a physical observable) along the trajectories of the system is given by
\begin{equation}
\dot{f}=X_{H}f=\lbrace f,H\rbrace \ ,
\end{equation}
so that $f$ is a constant of motion if and only if $\lbrace f,H\rbrace=0$. It is clear that the Hamiltonian function is a constant of motion, it represents the total (kinetic plus potential) energy of the system.

On the other hand, the phase space of a time-dependent Hamiltonian system is geometrically identified with the extended phase space $M\times\mathbb{R}$ \cite{struckmeier2005hamiltonian,Torres2020}, where $M$ is a symplectic manifold. The extended phase space $M\times\mathbb{R}$ has a natural structure of cosymplectic manifold (for a brief review of the formalism of cosymplectic geometry see \cite{AE2023,azuaje2023scaling} and for a more detailed exposition see \cite{CLL92,LS2017}). Let $\Omega$ be the symplectic structure on $M$, then locally we have
\begin{equation}
\Omega=dq^{i}\wedge dp_{i}\hspace{1cm}.
\end{equation}
The Hamiltonian vector field $X_{f}$ for $F\in C^{\infty}(M\times\mathbb{R})$ is given by
\begin{equation}
X_{f}\lrcorner \Omega =df-\frac{\partial f}{\partial t}dt \hspace{1cm}\rm{and}\hspace{1cm} X_{f}\lrcorner dt =0,
\end{equation}
where $t$ is the global coordinate on $\mathbb{R}$. In canonical coordinates we have
\begin{equation}
X_{f}=\frac{\partial f}{\partial p_{i}}\frac{\partial}{\partial q^{i}}-\frac{\partial f}{\partial q^{i}}\frac{\partial}{\partial p_{i}}.
\end{equation}

The Poisson bracket is defined in the same way as in the symplectic formalism and, of course, it has the same local form.

The Hamiltonian dynamics is defined by the evolution vector field $E_{H}=X_{H}+\frac{\partial}{\partial t}$ for the given Hamiltonian function $H\in C^{\infty}(M\times\mathbb{R})$. Again, the equations of motion are the Hamilton's equations of motion and the temporal parameter (the time) is the coordinate $t$ on $\mathbb{R}$. The evolution of a function $f\in C^{\infty}(M\times\mathbb{R})$ along the trajectories of the system is given by
\begin{equation}
\dot{f}=E_{H}f=X_{H}f+\frac{\partial f}{\partial t}=\lbrace f,H\rbrace+\frac{\partial f}{\partial t}\ .
\end{equation}
So that $f$ is a constant of motion if and only if $\lbrace f,H\rbrace+\frac{\partial f}{\partial t}=0$. Cosymplectic Hamiltonian systems describe dissipative systems, since the Hamiltonian function $H$ (representing the total energy of the system) is not a constant of motion, $\dot{H}=\frac{\partial H}{\partial t}$ is not zero for time-dependent Hamiltonian functions.

\subsection{Canonical transformations in the geometric time-independent formalism}
Let $(M,\omega,H)$ be a time-independent Hamiltonian system.
\begin{de}
Let $(M,\omega)$ and $(N,\eta)$ two symplectic manifolds. A diffeomorphism $\Phi:M\longrightarrow N$ such that $\Phi^{*}\eta=\omega$ is called a symplectomorphism.
\end{de}
Symplectomorphisms on the phase space of a Hamiltonian system are called (geometric) canonical transformations \cite{CFR2013,AE2023,azuaje2023scaling,struckmeier2005hamiltonian}; explicitly:
\begin{de}
\label{degeocanonical}
A geometric canonical transformation of $(M,\omega,H)$ is a diffeomorphism $\Phi$ on $M$ such that $\Phi^{*}\omega=\omega$.
\end{de}
Observe that as in the classical coordinate-based viewpoint, the geometric canonical transformations are independent of the Hamiltonian function. For the geometric definition of canonical transformations it is only required the existence of a symplectic structure on $M$.

Given a diffeomorphism $\Phi:M\longrightarrow M$,  we know that $\Phi^{*}\omega$ is a symplectic structure on $M$, so around any point $p\in M$ there are local coordinates $(Q^{1},\cdots,Q^{n},P_{1},\cdots,P_{n})$ such that
\begin{equation}
\Phi^{*}\omega=dQ^{i}\wedge dP_{i}.
\end{equation} 
Let us consider the Poisson bracket defined by $\Phi^{*}\omega$, denoted by $\overline{\lbrace,\rbrace}$, we have
\begin{equation}
\overline{\lbrace f,g\rbrace}=\overline{X}_{g}f,
\end{equation}
where $\overline{X}_{g}$ is the Hamiltonian vector field defined by $\Phi^{*}\omega$ for $g\in C^{\infty}(M)$ ($\overline{X}_{g}\lrcorner \Phi^{*}\omega=dg$); in the canonical coordinates $(Q^{1},\cdots,Q^{n},P_{1},\cdots,P_{n})$ it has the form
\begin{equation}
\overline{\lbrace f,g\rbrace}=\frac{\partial f}{\partial Q^{i}}\frac{\partial g}{\partial P_{i}}-\frac{\partial f}{\partial P_{i}}\frac{\partial g}{\partial Q^{i}}.
\end{equation}
If $\Phi$ is a canonical transformation then $\overline{X}_{g}=X_{g}$, so 
\begin{equation}
\overline{\lbrace f,g\rbrace}=\lbrace f,g\rbrace,
\end{equation}
therefore we have
\begin{equation}
\lbrace Q^{i},Q^{j}\rbrace=\overline{\lbrace Q^{i},Q^{j}\rbrace}=0, \quad \lbrace P_{i},P_{j}\rbrace=\overline{\lbrace P_{i},P_{j}\rbrace}=0 \quad {\rm and}\quad \lbrace Q^{i},P_{j}\rbrace=\overline{\lbrace Q^{i},P_{j}\rbrace}=\delta^{i}_{j};
\end{equation}
i.e., if we think of $\Phi$ locally as a coordinate transformation $(q^{1},\cdots,q^{n},p_{1},\cdots,p_{n})\mapsto (Q^{1},\cdots,Q^{n},P_{1},\cdots,P_{n})$, 
then it is a canonical transformation in the classical sense. Sometimes we will ad the adjective ¨geometric¨ to canonical transformations given in the geometrical sense in order to avoid confusions.

\subsection{Canonical transformations in the geometric time-dependent formalism}
Let $(M\times\mathbb{R},\Omega,H)$ be a time-dependent Hamiltonian system, i.e., $H\in C^{\infty}(M\times\mathbb{R})$. For any transformation (diffeomorphism) $\Phi$ on $M\times\mathbb{R}$ we require that it preserves the temporal parameter $t$, which implies that we can consider the map $\Phi_{t}:M\longrightarrow M$ for each fixed value of $t$ and it is a transformation, i.e., a diffeomorphism. We propose the following definition of canonical transformation within the geometric formalism of time-dependent Hamiltonian Mechanics.

\begin{de}
\label{decanonicaltime}
A canonical transformation of $(M\times\mathbb{R},\Omega,H)$ is a diffeomorphism $\Phi$ on $M\times\mathbb{R}$ such that 
\begin{equation}
\label{eqtimecanonical}
\Phi^{*}\Omega-\Omega=d(H-K)\wedge dt,
\end{equation}
for some function $K\in C^{\infty}(M\times\mathbb{R})$.
\end{de}
The above definition (\ref{eqtimecanonical}) is more general than the one previously 
presented in \cite{azuaje2023scaling}.

Now, one could think that definition (\ref{eqtimecanonical}) of canonical transformations is not independent of the Hamiltonian $H$. However, equation (\ref{eqtimecanonical}) just tells us the relation between two Hamiltonians $H$ and $K$. In fact. we can rewrite this definition without considering a Hamiltonian system at all, but only a phase space of the form $(M\times\mathbb{R},\Omega)$. An equivalent definition to (\ref{eqtimecanonical}) can be stated as follows: a canonical transformation of $(M\times\mathbb{R},\Omega)$ is a diffeomorphism $\Phi$ on $M\times\mathbb{R}$ (preserving $t$) such that 
\begin{equation}
\Phi^{*}\Omega-\Omega=dJ\wedge dt,
\end{equation}
for some function $J\in C^{\infty}(M\times\mathbb{R})$. For a given Hamiltonian function $H$, the new Hamiltonian $K$ satisfies $d(H-K)=dJ$.

We can observe that locally equation (\ref{eqtimecanonical}) is of the form
\begin{equation}
dQ^{i}\wedge dP_{i}-dq^{i}\wedge dp_{i}=d(H-K)\wedge dt,
\end{equation}
which is the general condition for a coordinate canonical transformation with new Hamiltonian function $K$ \cite{Calkin96,torres2018introduction}.

\subsection{One-parameter groups of canonical transformations and its generators}
\label{subseconeparameter}
In this subsection $M$ may be a symplectic manifold or $M\times\mathbb{R}$, unless it is explicitly indicated.
\begin{de}
\label{delog}
A local one-parameter group of transformations on $M$ is a family of diffeomorphisms $\lbrace \Psi_{s}\rbrace_{s\in I}$ on $M$ with $I$ an open interval containing the zero, such that for each $s$, $\Psi_{0}=Id_{M}$ (the identity function), $\Psi_{s}\circ \Psi_{t}=\Psi_{s+t}$ (the group property) and the map $\Psi:I\times M\longrightarrow M$ defined by $\Psi(s,p)=\Psi_{s}(p)$ is a smooth map. 
\end{de}
Definition \ref{delog} can be rewritten by just saying that a local one-parameter group of transformations $\lbrace \Psi_{s}\rbrace_{s\in I}$ on $M$ is a smooth local flow $\Psi:I\times M\longrightarrow M$ on $M$ \cite{Lee2012}. The interval for the parameter $s$ may be the whole set of real numbers, in this case the family of transformations is called a global one-parameter group of transformations and the map $\Psi:\mathbb{R}\times M\longrightarrow M$ is a smooth global flow.

Given a smooth flow $\Psi$ (local or global) on $M$, for each $p\in M$ we can consider the smooth curve $\Psi^{p}:I\longrightarrow M$ defined by $\Psi^{p}(s)=\Psi(s,p)$, and the tangent vector $v_{p}\in T_{p}M$ given by $v_{p}=\frac{d}{ds}\Psi^{p}(s)|_{s=0}$; we have that the assignment $p\mapsto v_{p}$ is a smooth vector field $V$ on $M$ called the infinitesimal generator of the flow . The fundamental theorem of flows states (in few words) that each smooth vector field $V$ on $M$ is the infinitesimal generator of a (possibly local) smooth flow $\Psi$ such that $\Psi_{s}:M\longrightarrow M$ is a (possibly local) diffeomorphism and $\Psi^{p}:I\longrightarrow M$ is an integral curve of $V$ \cite{Lee2012}. We can make reference to the flow $\Psi$ or to the one-parameter group of transformations $\lbrace\Psi_{s}\rbrace_{s\in I}$ since they are in a one-to-one correspondence and, we can refer to the infinitesimal generator of the flow $\Psi$ as the infinitesimal generator of the one-parameter group of transformations $\lbrace\Psi_{s}\rbrace_{s\in I}$. 

\begin{de}
A (local) one-parameter group of canonical transformations is a (local) one-parameter group of transformations  $\lbrace \Psi_{s}\rbrace_{s\in I}$ on $M$ such that $\Psi_{s}:M\longrightarrow M$ is a canonical transformation for each $s$. 
\end{de}
The following theorem tell us that infinitesimal generators of one-parameter groups of canonical transformations are Hamiltonian vector fields.
\begin{te}
Let  $\lbrace \Psi_{s}\rbrace_{s\in I}$ be a one-parameter group of transformations on $M$ and $V\in \mathfrak{X}(M)$ its infinitesimal generator. $\lbrace \Psi_{s}\rbrace_{s\in I}$ is a one-parameter group of canonical transformations if and only if $V$ is a Hamiltonian vector field. 
\end{te}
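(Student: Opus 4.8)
The plan is to convert the requirement that every $\Psi_{s}$ be canonical into an infinitesimal condition on the generator $V$ via the Lie derivative, and then to decode that condition with Cartan's magic formula. The first step is the differentiation identity: from the group law $\Psi_{s+h}=\Psi_{h}\circ\Psi_{s}$, hence $\Psi_{s+h}^{*}=\Psi_{s}^{*}\circ\Psi_{h}^{*}$, one gets for any differential form $\alpha$
\[
\frac{d}{ds}\Psi_{s}^{*}\alpha=\Psi_{s}^{*}(\mathcal{L}_{V}\alpha).
\]
Since each $\Psi_{s}$ is a diffeomorphism, $\Psi_{s}^{*}$ is injective on forms; this will be used repeatedly.

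Consider first the symplectic case. Being canonical means $\Psi_{s}^{*}\omega=\omega$ for all $s\in I$. Differentiating and using injectivity of $\Psi_{s}^{*}$ shows this is equivalent to $\mathcal{L}_{V}\omega=0$; for the converse, $\mathcal{L}_{V}\omega=0$ forces $\Psi_{s}^{*}\omega$ to be constant on the connected interval $I$, hence equal to $\Psi_{0}^{*}\omega=\omega$. Now Cartan's formula $\mathcal{L}_{V}\omega=d(V\lrcorner\omega)+V\lrcorner d\omega$ together with $d\omega=0$ gives that $\mathcal{L}_{V}\omega=0$ is equivalent to $d(V\lrcorner\omega)=0$. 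If $V=X_{f}$ then $V\lrcorner\omega=df$ is closed, and conversely a closed $V\lrcorner\omega$ is of the form $df$ by the Poincar\'e lemma, that is, $V$ is Hamiltonian; this settles the symplectic case.

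For the time-dependent case ($M\times\mathbb{R}$ with its cosymplectic structure), canonicity of each $\Psi_{s}$ means, in the Hamiltonian-free formulation equivalent to Definition~\ref{decanonicaltime}, that $\Psi_{s}$ preserves $t$ and $\Psi_{s}^{*}\Omega-\Omega=dJ_{s}\wedge dt$ for some $J_{s}$. Preservation of $t$ yields $Vt=0$, hence $V\lrcorner dt=0$ and $\Psi_{s}^{*}dt=dt$. Differentiating the $\Omega$-identity at $s=0$ (where $\Psi_{0}=\mathrm{Id}$) gives $\mathcal{L}_{V}\Omega=dg\wedge dt$ with $g=\partial_{s}J_{s}|_{s=0}$; conversely, given such a $g$, integrating $\frac{d}{ds}\Psi_{s}^{*}\Omega=\Psi_{s}^{*}(dg\wedge dt)=d(g\circ\Psi_{s})\wedge dt$ from $0$ to $s$ reproduces the finite condition with $J_{s}=\int_{0}^{s}(g\circ\Psi_{\tau})\,d\tau$. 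Since $d\Omega=0$, the relation $\mathcal{L}_{V}\Omega=dg\wedge dt$ reads $d(V\lrcorner\Omega-g\,dt)=0$; in canonical coordinates $V\lrcorner\Omega$ carries no $dt$-component, so separating the $dt$-free part of this identity gives $d_{M}(V\lrcorner\Omega)=0$, where $d_{M}$ differentiates only in the $M$-directions. The Poincar\'e lemma with $t$ as a parameter then produces, locally, a smooth $f=f(q,p,t)$ with $V\lrcorner\Omega=d_{M}f=df-\frac{\partial f}{\partial t}\,dt$, that is, $V=X_{f}$; the converse direction is the direct computation $\mathcal{L}_{X_{f}}\Omega=-d(\partial f/\partial t)\wedge dt$.

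The step I expect to be the main obstacle is the passage from ``$V\lrcorner\omega$ closed'' to ``$V\lrcorner\omega$ exact'': this holds locally by the Poincar\'e lemma but globally only when $H^{1}_{\mathrm{dR}}$ of the underlying manifold vanishes, so the statement is intrinsically local (equivalently, ``Hamiltonian vector field'' should be read as ``locally Hamiltonian'' in the general case). The secondary technical point is, in the cosymplectic setting, isolating the $dt$-component cleanly and ensuring that the primitive $f$ can be chosen to depend smoothly---not merely pointwise---on $t$, which is exactly the content of the parametrized Poincar\'e lemma.
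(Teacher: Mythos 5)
Your proof is correct and, in the time-independent case, follows exactly the paper's route: differentiate $\Psi_{s}^{*}\omega=\omega$ to get $\mathcal{L}_{V}\omega=0$, apply Cartan's formula to reduce this to $d(V\lrcorner\omega)=0$, and invoke the (local) Poincar\'e lemma; your caveat that ``Hamiltonian'' must be read as ``locally Hamiltonian'' unless $H^{1}_{\mathrm{dR}}(M)=0$ matches the paper's ``possibly-local function $f$.'' In the time-dependent half you take a mildly different but equivalent path: you differentiate the extended-space condition $\Psi_{s}^{*}\Omega-\Omega=dJ_{s}\wedge dt$ directly and then isolate the $dt$-free part of $d(V\lrcorner\Omega-g\,dt)=0$, whereas the paper first restricts each $\Psi_{s}$ to a fixed-$t$ slice (where the $dJ\wedge dt$ term dies), applies the time-independent result there to get $V\lrcorner\Omega=df_{t}$ slice by slice, and then reassembles $f$. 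Your organization buys two things the paper leaves implicit: the smooth dependence of the primitive $f$ on $t$ (the parametrized Poincar\'e lemma, which the paper passes over with ``if we consider $f=f_{t}$ as a differentiable function on $M\times\mathbb{R}$'') and an explicit converse in the time-dependent case, via $\mathcal{L}_{X_{f}}\Omega=-d(\partial f/\partial t)\wedge dt$ and the integration $J_{s}=\int_{0}^{s}(g\circ\Psi_{\tau})\,d\tau$, where the paper only runs the forward direction and appeals to ``the previous calculus.''
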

\begin{proof}
First let us see the time-independent case. $\lbrace \Psi_{s}\rbrace_{s\in I}$ is a one-parameter group of canonical transformations if and only if $\Psi_{s}^{*}\omega=\omega$ for every $s\in I$, which is equivalent to $L_{V}\omega=0$. We have
\begin{equation}
L_{V}\omega=V\lrcorner d\omega+d(V\lrcorner\omega)=d(V\lrcorner\omega),
\end{equation}
so 
\begin{equation}
L_{V}\omega=0 \Longleftrightarrow d(V\lrcorner\omega)=0.
\end{equation}
Equation $d(V\lrcorner\omega)=0$ is equivalent to the existence of a possibly-local function $f$ on $M$ such that $V\lrcorner\omega=df$, which means that $V$ is a Hamiltonian vector field.

Now let us aboard the time-dependent case, since the transformation $\Psi_{s}$ preserve the temporal parameter $t$, then we have $V\lrcorner dt=0$, i.e., for each fixed value of $t$ we can consider the vector field $V|_{M}$ on $M$, and we have that it is the infinitesimal generator of a one-parameter group of canonical transformations on $M$, so we must have $\Psi_{t,s}^{*}\Omega=\Omega$ for each fixed value of $t$, then by the previous calculus we have that on $M$ for each fixed value of $t$ there exists a function $f_{t}$ such that
\begin{equation}
V\lrcorner\Omega=df_{t}.
\end{equation}
If we consider the $f=f_{t}$ as a differentiable function on $M\times\mathbb{R}$ then we have
\begin{equation}
V\lrcorner\Omega=df-\frac{\partial f}{\partial t}dt,
\end{equation}
which means that $V$ is a Hamiltonian vector field.
\end{proof}
For a one-parameter group of canonical transformations with infinitesimal generator $V=X_{f}$ we say that $f$ is the generator of such one-parameter group of canonical transformations.

We have seen that (geometric) canonical transformations $\Psi$ can be seen locally as coordinate canonical transformations $(q^{1},\cdots,q^{n},p_{1},\cdots,p_{n})\mapsto (Q^{1},\cdots,Q^{n},P_{1},\cdots,P_{n})$, so a one-parameter group of canonical transformations can be seen locally as a family of coordinate canonical transformations of the form $\lbrace (Q_{s}(q,p,t),P_{s}(q,p,t))\rbrace_{s\in I}$. 

One-parameter groups of coordinate canonical transformations of the form $\lbrace (Q_{s}(q,p,t),P_{s}(q,p,t))\rbrace_{s\in I}$ are studied in \cite{torres2018introduction}, there the authors use the notation $(Q(q,p,t,s),P(q,p,t,s))$ for the family $\lbrace (Q_{s}(q,p,t),P_{s}(q,p,t))\rbrace_{s\in I}$ and it is shown that if $(Q(q,p,t,s),P(q,p,t,s))$ is a one-parameter group of coordinate canonical transformations then there exists a function $f(q,p,t)$ (a possibly local function on $M\times\mathbb{R}$) such that
\begin{equation}
\label{eqgengroup}
\frac{\partial Q^{i}}{\partial s}|_{s=0}=\frac{\partial f}{\partial p_{i}}\quad {\rm and}\quad \frac{\partial Q^{i}}{\partial s}|_{s=0}=-\frac{\partial f}{\partial q^{i}}.
\end{equation}
And reciprocally, any differentiable function $f(q,p,t)$ defines a one-parameter group of coordinate canonical transformations of the form $(Q(q,p,t,s),P(q,p,t,s))$ such that equations (\ref{eqgengroup}) hold. It is easy to see that if we have a one-parameter group of (geometric) canonical transformations $\Psi$ generated by a function $f\in C^{\infty}(M\times\mathbb{R})$ (the infinitesimal generator is the Hamiltonian vector field $X_{f}$) then the generator of the one-parameter group of coordinate canonical transformations $(Q(q,p,t,s),P(q,p,t,s))$ defined locally by $\Psi$ ($\Psi$ seen as a flow is a possibly time-dependent vector function depending on the points $(q,p)\in M$ and the parameter $s$) is $f(q,p,t)$. Specifically, equations (\ref{eqgengroup}) are the Hamilton's equations of motion for the vector field $X_{f}$, so the solutions of such equations are the integral curves of $X_{f}$ which together define the flow $\Psi$ of $X_{f}$. An ilustraation resuming the main elements in the construction of one-parameter groups of canonical transformations as flows of Hamiltonian vector fields can be seen in Figure \ref{fig1}.

\begin{figure}[ht]
\centering
\includegraphics[width=12cm]{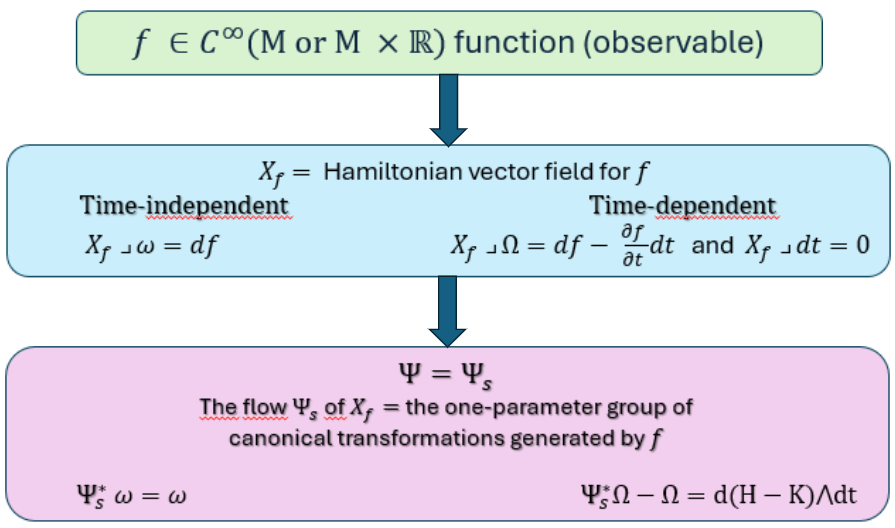}
\caption{A one-parameter group of canonical transformations represented as the flow of a Hamiltonian vector field.}
\label{fig1}
\end{figure}

Below we present an example to illustrate the ideas discussed previously. 
\begin{ex}
Let us consider the one-parameter group of coordinate canonical transformations given by 
\begin{equation}
\label{eqoneparacoord}
Q=qe^{s}-\frac{tp}{m}(e^{s}-e^{-s}), \quad P=pe^{-s}.
\end{equation}
In \cite{torres2018introduction} it is shown that this one-parameter group of coordinate canonical transformations is generated by $f(q,p,t)=qp-\frac{tp^{2}}{m}$, i.e., equations (\ref{eqgengroup}) hold. Firstly, let us observe that the flow of $X_{f}$ is given locally by $\Psi(q,p,t,s)=(qe^{s}-\frac{tp}{m}(e^{s}-e^{-s}),pe^{-s})$. Indeed, the equations for the integral curves $\Psi(s)=(q(s),p(s))$ of $X_{f}$ read
\begin{equation}
\dot{q}=q-\frac{2tp}{m},\quad \dot{p}=-p\ ,
\end{equation}
with the general solution 
\begin{equation}
q(s)=e^{s}c_{1}-\frac{(e^{s}-e^{-s})t}{m}c_{2},\quad p(s)=e^{-s}c_{2}\ ,
\end{equation}
the constants $c_{1},c_{2}$ are determined by $q(0)=q,p(0)=p$, i.e., we have $c_{1}=q$ and $c_{2}=p$. So the flow of $X_{f}$ is given locally by
\begin{equation}
\Psi(q,p,t,s)\ = \ (q\,e^{s}-\frac{t\,p}{m}(e^{s}-e^{-s}),\,pe^{-s}).
\end{equation}
It follows that the one-parameter group of coordinate canonical transformations $(Q(q,p,t,s)=qe^{s}-\frac{tp}{m}(e^{s}-e^{-s}),P(q,p,t,s)=pe^{-s})$ is the local expression for the one-parameter group of (geometric) canonical transformations $\Psi$ with infinitesimal generator $X_{f}$ (or generator $f$).

We can verify that $\Psi$ is a bona fide one-parameter group of geometric canonical transformation, indeed
\begin{equation}
\begin{split}
dQ\wedge dP-dq\wedge dp &=(e^{s}dq-\frac{t}{m}(e^{s}-e^{-s})dp-\frac{p}{m}(e^{s}-e^{-s})dt)\wedge(e^{-s}dp)-dq\wedge dp\\
&=-\frac{p}{m}(e^{s}-e^{-s})dt\wedge dp\\
&=\frac{p}{m}(e^{s}-e^{-s})dp\wedge dt.
\end{split}
\end{equation}
Hence, $\Psi$ is a one-parameter group of canonical transformation. Observe that up to now, we have not considered any Hamiltonian function (in complete agreement with the fact that canonical transformations are independent of the Hamiltonian function, or in other words, canonical transformations are independent of the dynamics). Now, given a Hamiltonian $H$, the new Hamiltonian $K$ reads $K(q,p,t,s)=H-\frac{p}{m}(e^{s}-e^{-s})+g(t,s)$,
where $g(t,s)$ is any function of $t$ and $s$. 

We can also verify that for each fixed value of $t$ we have $\Psi_{t,s}^{*}\Omega=\Omega$. Explicitly,
\begin{equation}
\begin{split}
\Psi_{t,s}^{*}\Omega &=dQ\wedge dP=(e^{s}dq-\frac{t}{m}(e^{s}-e^{-s})dp)\wedge e^{-s}dp\\
&=dq\wedge dp\\
&=\Omega, 
\end{split}
\end{equation}
where the differentials are on $M$ so the variables $t,s$ are treated as parameters.
\end{ex}

\textbf{Remark}: Infinitesimal generators of one-parameter groups of canonical transformations are not considered in \cite{torres2018introduction} and, to the best of our knowledge, in the literature there is no description of them in the time-dependent framework.

Infinitesimal canonical transformations are the infinitesimal version of one-parameter groups of coordinate canonical transformations of the form $(Q(q,p,t,s),P(q,p,t,s))$, namely, by expanding about $s=0$ and neglecting second and higher order terms in $s$ we obtain the infinitesimal version of the one-parameter group of coordinate canonical transformations (we take $\epsilon=s$ to be the infinitesimal parameter). The infinitesimal version of the one-parameter group of coordinate canonical transformations $(Q(q,p,t,s)=qe^{s}-\frac{tp}{m}(e^{s}-e^{-s}),P(q,p,t,s)=pe^{-s})$ is presented in \cite{torres2018introduction} and it is
\begin{equation}
\label{eqinfinitesimal}
Q=q+q\,\epsilon-\frac{2\,t\,p}{m}\epsilon,\quad P=p-p\,\epsilon.
\end{equation}
Clearly, the generating function of an infinitesimal canonical transformation is the same generating function of the one-parameter group of coordinate canonical transformations defining it. For the infinitesimal canonical transformation (\ref{eqinfinitesimal}) the generating function is $G(q,p,t)=qp-\frac{tp^{2}}{m}$ which coincides with the generating function $f(q,p,t)$ of the one-parameter group of coordinate canonical transformations (\ref{eqoneparacoord}). In general, as a natural fact, the infinitesimal canonical transformation with a given generating function is the infinitesimal version of the one-parameter group of coordinate canonical transformations generated by the same function.

\section{Symmetries of Hamiltonian systems and groups of invariance transformations}
\label{subsecsymmetries}
It is well known that associated with any infinitesimal invariance transformation there is a constant of motion, which is the generator of the transformation \cite{Calkin96} (see subsection \ref{subsec2.3} for a brief review). In addition, in \cite{torres2018introduction} it is shown that if a one-parameter group of coordinate canonical transformations leaves the Hamiltonian invariant, then its generating function is a constant of motion; and reciprocally, any constant of motion is the generating function of a one-parameter group of coordinate canonical transformations leaving the Hamiltonian invariant. It is said that a Hamiltonian $H(q,p,t)$ is invariant under a coordinate canonical transformation $(Q(q,p,t),P(p,p,t))$ (or that it leaves the Hamiltonian invariant) if the new Hamiltonian is
\begin{equation}
\label{eqinvariance}
K(q,p,t)\ = \ H(Q(q,p,t),P(p,p,t),t) \ .
\end{equation}
If the canonical transformation is time-independent, i.e., $Q=Q(q,p),P=P(q,p)$, then the new Hamiltonian can be taken the same as the old one $K(q,p,t)=H(q,p,t)$. So a time-independent canonical transformation $(Q(q,p),P(q,p))$ leaves the Hamiltonian $H(q,p,t)$ invariant if $H(Q,P,t)=H(q,p,t)$.

From the previous discussion, we can see that for time-independent transformations the condition for invariance can be more specific or restrictive than in the time-dependent case; of course condition 
(\ref{eqinvariance}) defines invariance in both cases, time-independent and time-dependent.

In this section we develop the invariance notion from the geometric framework, we show that this geometric notion coincides locally with the invariance notion in the coordinate-based approach and, we relate the notion of invariance with the notion of infinitesimal symmetry. In the geometric formalism we can also consider a more specific condition in the time-independent case. Moreover, as we will see soon, there is a general condition for invariance containing both the time-independent and the time-dependent cases. In order to make the developments in this paper as much clear as possible, we present first the geometric formalism for invariance in the time-independent case.

Let $(M,\omega,H)$ be a time-independent Hamiltonian system.
\begin{de}
It is said that the time-independent canonical transformation  $\Phi:M\longrightarrow M$ leaves the Hamiltonian $H$ invariant if $\Phi^{*}H=H$ (or equivalently $H\circ \Phi=H$).
\end{de}
In the case of a one-parameter group of time-independent canonical transformations $\lbrace\Psi_{s}\rbrace$, it is said that the Hamiltonian $H$ is invariant under the group if it is invariant under each transformation $\Psi_{s}$, i.e., $\Psi_{s}^{*}H=H$.

It is clear that the infinitesimal generator of a one-parameter group of time-independent canonical transformations leaving the Hamiltonian invariant is a Hamiltonian vector field $X_{f}$ such that
$L_{X_{f}}\omega=L_{X_{f}}H=0$. A vector field $X$ satisfying
\begin{equation}
L_{X}\omega=L_{X}H=0
\end{equation}
is called an infinitesimal symmetry of the Hamiltonian system $(M,\omega,H)$ \cite{Lee2012}. One of the
consequences of the existence of symmetries in general in Classical Mechanics is their association with constants of motion, this is the result of the celebrated Noether theorem. There is a version of Noether theorem for Lagrangian Mechanics \cite{Arnold78,KS2011} as well as for Hamiltonian Mechanics \cite{torres2018introduction} without the employment of geometric formalisms, relating one-parameter groups of coordinate canonical transformations (leaving the Hamiltonian invariant) with constants of motion. Also, the geometric version for Hamiltonian Mechanics reads \cite{Lee2012,azuaje2023scaling,BG2021}:
\begin{te}
\begin{enumerate}
\item[i)] If $f$ is a constant of motion of $(M,\omega,H)$, then its associated Hamiltonian vector field 
$X_{f}$ is an infinitesimal symmetry of $(M,\omega,H)$. 
\end{enumerate}
Reciprocally,
\begin{enumerate}
\item[ii)] if $V$ is an infinitesimal symmetry of $(M,\omega,H)$, then there is a (possibly-local) function $f$ such that $V=X_{f}$ and $f$ is a constant of motion.
\end{enumerate}
\end{te}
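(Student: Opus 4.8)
The plan is to reduce both implications to two elementary facts already available in the excerpt: Cartan's formula $L_{X}\alpha = X\lrcorner d\alpha + d(X\lrcorner\alpha)$ together with the closedness $d\omega=0$, and the dictionary between Hamiltonian vector fields and the Poisson bracket. In particular, since the paper sets $\lbrace f,g\rbrace = X_{g}f$, one has $L_{X_{f}}H = X_{f}H = \lbrace H,f\rbrace = -\lbrace f,H\rbrace$, and for a function $g$ the Lie derivative $L_{X}g$ is just $Xg$. Keeping this sign convention straight is the only thing that requires care.

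For part (i), I would assume $f$ is a constant of motion, i.e. $\lbrace f,H\rbrace = 0$. Because $X_{f}\lrcorner\omega = df$ by definition, Cartan's formula gives $L_{X_{f}}\omega = X_{f}\lrcorner d\omega + d(X_{f}\lrcorner\omega) = 0 + d(df) = 0$ — this is exactly the computation already carried out in the proof of the theorem on one-parameter groups of canonical transformations. Next, $L_{X_{f}}H = X_{f}H = -\lbrace f,H\rbrace = 0$. Hence $L_{X_{f}}\omega = L_{X_{f}}H = 0$, which is precisely the definition of $X_{f}$ being an infinitesimal symmetry of $(M,\omega,H)$.

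For part (ii), I would start from an infinitesimal symmetry $V$, so $L_{V}\omega = 0$ and $L_{V}H = 0$. Applying Cartan's formula and $d\omega = 0$ once more yields $0 = L_{V}\omega = d(V\lrcorner\omega)$, so $V\lrcorner\omega$ is a closed $1$-form; by the Poincar\'e lemma, on any contractible neighbourhood (in particular locally about every point) there is a function $f$ with $V\lrcorner\omega = df$, and by non-degeneracy of $\omega$ this forces $V = X_{f}$ on that neighbourhood. Finally $0 = L_{V}H = VH = X_{f}H = -\lbrace f,H\rbrace$, so $\lbrace f,H\rbrace = 0$ and $f$ is a constant of motion on its domain, establishing the claim.

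The only subtlety — and the reason for the \emph{possibly-local} qualifier in the statement — is that $V\lrcorner\omega$ need not be exact on all of $M$ when the first de Rham cohomology $H^{1}_{dR}(M)$ is nontrivial, so in general $f$ exists only on contractible open subsets; there is no genuine analytic obstacle, since on such subsets the remaining argument is purely formal. I expect the main point worth emphasizing, rather than any hard step, to be the consistent bookkeeping of the sign $X_{f}H = -\lbrace f,H\rbrace$ induced by the convention $\lbrace f,g\rbrace = X_{g}f$, and the observation that part (i) and part (ii) are strict converses built from the same two displayed identities.
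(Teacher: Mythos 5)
Your proof is correct and is essentially the argument the paper intends: the paper itself states this time-independent theorem without proof (deferring to the cited literature), but the step $L_{V}\omega=0\Longleftrightarrow d(V\lrcorner\omega)=0\Longleftrightarrow V$ is locally Hamiltonian appears verbatim in its proof that infinitesimal generators of one-parameter groups of canonical transformations are Hamiltonian vector fields, and the step $L_{X_{f}}H=X_{f}H=-\lbrace f,H\rbrace=0$ mirrors the computation in its proof of the time-dependent Noether theorem. Your sign bookkeeping under the convention $\lbrace f,g\rbrace=X_{g}f$ and the Poincar\'e-lemma justification of the \emph{possibly-local} qualifier are both handled correctly.
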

We conclude that if a one-parameter group of time-independent (geometric) canonical transformations leaves the Hamiltonian invariant, then its generating function is a constant of motion; and reciprocally, any time-independent constant of motion is the generating function of a one-parameter group of time-independent (geometric) canonical transformations leaving the Hamiltonian invariant.

Now we address the problem of invariance under one-parameter groups of time-dependent (geometric) canonical transformations. Let $(M\times\mathbb{R},\Omega,H)$ be a time-dependent Hamiltonian system. We propose the following notion of invariance from the geometric approach.
\begin{de}
We say that the canonical transformation $\Psi:M\times\mathbb{R}\longrightarrow M\times\mathbb{R}$ leaves the Hamiltonian invariant if 
\begin{equation}
\label{eqtimeinvariance}
\Psi^{*}(\Omega+dH\wedge dt)=\Omega+dH\wedge dt.
\end{equation}
\end{de}
Equation (\ref{eqtimeinvariance}) is equivalent to
\begin{equation}
\Psi^{*}\Omega+d(\Phi^{*}H)\wedge dt=\Omega+dH\wedge dt,
\end{equation}
which can be rewritten as
\begin{equation}
\Psi^{*}\Omega-\Omega=d(H-\Phi^{*}H)\wedge dt;
\end{equation}
this last equation implies that the new Hamiltonian is
\begin{equation}
K=\Phi^{*}H=H\circ\Phi.
\end{equation}
Locally it can be written as
\begin{equation}
K(q,p,t)=H(Q,P,t),
\end{equation}
which is the general condition for a coordinate canonical transformation (time-independent or time-dependent) to be an invariance transformation. 

In the case of a one-parameter group of time-dependent canonical transformations $\lbrace\Psi_{s}\rbrace$, we say that the Hamiltonian is invariant under the group if it is invariant under each transformation $\Psi_{s}$, i.e.,
\begin{equation}
\Psi_{s}^{*}(\Omega+dH\wedge dt)=\Omega+dH\wedge dt.
\end{equation}

It is clear that the infinitesimal generator of a one-parameter group of time-dependent canonical transformations leaving the Hamiltonian invariant is a Hamiltonian vector field $X_{f}$ such that
$L_{X_{f}}(\Omega+dH\wedge dt)=0$. In this paper a vector field $X$ satisfying
\begin{equation}
L_{X}(\Omega+dH\wedge dt)=0
\end{equation}
shall be called an infinitesimal symmetry of the Hamiltonian system $(M\times\mathbb{R},\Omega,H)$. In \cite{azuaje2023scaling} it is presented, within the cosymplectic framework, a different definition for infinitesimal symmetries and a version of the Noether theorem relating infinitesimal symmetries and constants of motion. Here we present a version of the Noether theorem relating infinitesimal symmetries and constants of motion; this theorem is the geometric version of the one presented in \cite{torres2018introduction} for one-parameter groups of coordinate canonical transformations leaving the Hamiltonian invariant.
\begin{te}
\begin{enumerate}
\item[i)] If $f$ is a constant of motion of $(M\times\mathbb{R},\Omega,H)$, then its associated Hamiltonian vector field 
$X_{f}$ is an infinitesimal symmetry of $(M\times\mathbb{R},\Omega,H)$. 
\end{enumerate}
Reciprocally,
\begin{enumerate}
\item[ii)] if $V$ is an infinitesimal symmetry of $(M\times\mathbb{R},\Omega,H)$, then there is a (possibly-local) function $f$ such that $V=X_{f}$ and $f$ is a constant of motion.
\end{enumerate}
\end{te}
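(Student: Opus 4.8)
The plan is to prove both implications by feeding the vector field into Cartan's magic formula applied to the two-form $\Omega+dH\wedge dt$. The first observation I would record is that this form is \emph{closed}: $d\Omega=0$ since locally $\Omega=dq^{i}\wedge dp_{i}$, and $d(dH\wedge dt)=0$, so $d(\Omega+dH\wedge dt)=0$. Consequently, for any vector field $X$ on $M\times\mathbb{R}$,
\begin{equation}
L_{X}(\Omega+dH\wedge dt)=X\lrcorner\, d(\Omega+dH\wedge dt)+d\big(X\lrcorner(\Omega+dH\wedge dt)\big)=d\big(X\lrcorner(\Omega+dH\wedge dt)\big),
\end{equation}
so the whole statement reduces to understanding the one-form $X\lrcorner(\Omega+dH\wedge dt)$ and when it is closed.

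For part (i), I would start from a constant of motion $f$, i.e.\ $\{f,H\}+\partial f/\partial t=0$. Using the defining relations $X_{f}\lrcorner\Omega=df-(\partial f/\partial t)\,dt$ and $X_{f}\lrcorner dt=0$, together with $X_{f}\lrcorner(dH\wedge dt)=(X_{f}H)\,dt=-\{f,H\}\,dt$, a short computation gives
\begin{equation}
X_{f}\lrcorner(\Omega+dH\wedge dt)=df-\Big(\{f,H\}+\tfrac{\partial f}{\partial t}\Big)dt=df,
\end{equation}
which is exact, hence closed; applying $d$ yields $L_{X_{f}}(\Omega+dH\wedge dt)=0$, i.e.\ $X_{f}$ is an infinitesimal symmetry of $(M\times\mathbb{R},\Omega,H)$.

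For part (ii), I would take $V$ an infinitesimal symmetry; since the canonical transformations under consideration preserve $t$, their infinitesimal generators annihilate $dt$, and I would keep $V\lrcorner dt=0$ as a standing hypothesis. From $L_{V}(\Omega+dH\wedge dt)=0$ and the closedness above, $V\lrcorner(\Omega+dH\wedge dt)$ is closed, hence locally $V\lrcorner(\Omega+dH\wedge dt)=df$ for some (possibly-local) function $f$. Because $V$ has no $\partial/\partial t$ component, $V\lrcorner\Omega$ carries no $dt$ term in canonical coordinates, whereas $V\lrcorner(dH\wedge dt)=(VH)\,dt$; matching the $dt$ parts of $V\lrcorner\Omega+(VH)\,dt=df$ gives $VH=\partial f/\partial t$ together with $V\lrcorner\Omega=df-(\partial f/\partial t)\,dt$, and this last identity is precisely the statement $V=X_{f}$. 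Substituting back, $X_{f}H=\partial f/\partial t$, that is $-\{f,H\}=\partial f/\partial t$, so $\{f,H\}+\partial f/\partial t=0$ and $f$ is a constant of motion.

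The interior-product manipulations are routine; the two points that require care are the closedness of $\Omega+dH\wedge dt$ (so that Cartan's formula collapses to an exact one-form) and, in part (ii), the $dt$-bookkeeping that simultaneously identifies $V$ as a Hamiltonian vector field and extracts the conservation law from the same primitive. I expect this bookkeeping to be the main obstacle, chiefly because, exactly as in the geometric Noether theorem quoted just above, the primitive $f$ is in general only local and is determined only up to an additive constant, so the conclusion must be phrased for a possibly-local $f$.
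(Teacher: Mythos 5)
Your proof is correct, and it takes a genuinely different route from the paper's. The paper never invokes closedness of $\Omega+dH\wedge dt$ or Cartan's formula on that combined form; instead it expands $L_{X_{g}}(\Omega+dH\wedge dt)$ term by term with the Leibniz rule and the Poisson-bracket identities, arriving at the single identity $L_{X_{g}}(\Omega+dH\wedge dt)=-d\left(\lbrace H,g\rbrace+\frac{\partial g}{\partial t}\right)\wedge dt$, and then reads both directions off from it. That route forces an extra step your argument avoids: vanishing of the right-hand side only gives $\lbrace H,g\rbrace+\partial g/\partial t=c$ (a constant), so the paper must replace $g$ by $f=g-ct$ to land on a genuine constant of motion, whereas in your version this ambiguity is absorbed automatically into the choice of primitive for the closed one-form $V\lrcorner(\Omega+dH\wedge dt)$, since you pick $f$ so that its $dt$-component already matches $VH$. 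More importantly, the paper's computation is carried out only for vector fields that are \emph{already} of the form $X_{g}$, so it does not actually establish the first half of part (ii) --- that an arbitrary infinitesimal symmetry $V$ is Hamiltonian. Your $dt$-bookkeeping supplies exactly that missing piece: from $d\bigl(V\lrcorner(\Omega+dH\wedge dt)\bigr)=0$ you extract a local primitive $f$ and identify $V=X_{f}$ and the conservation law simultaneously. One caveat you handled well and should keep explicit: the standing hypothesis $V\lrcorner dt=0$ is genuinely needed (the evolution field $E_{H}=X_{H}+\partial/\partial t$ satisfies $L_{E_{H}}(\Omega+dH\wedge dt)=0$ but is not a Hamiltonian vector field in this formalism), and it is justified here because the symmetries under discussion arise as infinitesimal generators of $t$-preserving transformations; the paper leaves this restriction implicit.
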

\begin{proof}
Let $g\in C^{\infty}(M\times\mathbb{R})$. We have
\begin{equation}
\begin{split}
L_{X_{g}}(\Omega+dH\wedge dt) &= L_{X_{g}}\Omega+d(L_{X_{g}}H)\wedge dt+dH\wedge d(L_{X_{g}}t)\\
&= d(X_{g}\lrcorner\Omega)+d\lbrace H,g\rbrace\wedge dt+dH\wedge d\lbrace t,g\rbrace\\
&= d(dg-\frac{\partial g}{\partial t}dt)+d\lbrace H,g\rbrace\wedge dt\\
&= -d(\frac{\partial g}{\partial t})\wedge dt-d\lbrace H,g\rbrace\wedge dt\\
&= -d(\lbrace H,g\rbrace+\frac{\partial g}{\partial t})\wedge dt,
\end{split}
\end{equation}
so we have that
\begin{equation}
L_{X_{g}}(\Omega+dH\wedge dt)=0 \Longleftrightarrow d(\lbrace H,g\rbrace+\frac{\partial f}{\partial t})=0 \Longleftrightarrow \lbrace H,g\rbrace+\frac{\partial g}{\partial t}=c \ ,
\end{equation}
where $c$ is a real constant. If $g$ is such that $\lbrace H,g\rbrace+\frac{\partial g}{\partial t}=c$ We can always chose $f=g-ct$, then $X_{f}=X_{g}$ and
\begin{equation}
\lbrace H,f\rbrace+\frac{\partial f}{\partial t}=\lbrace H,g\rbrace+\frac{\partial g}{\partial t}-c\ = \ 0 \ ,
\end{equation}
i.e., $X_{f}$ is an infinitesimal symmetry if and only if $f$ is a constant of motion.
\end{proof}
Below, we graphically resume in Figure \ref{fig2} the main results of this section \ref{subsecsymmetries}.
\begin{figure}[ht]
\centering
\includegraphics[width=12cm]{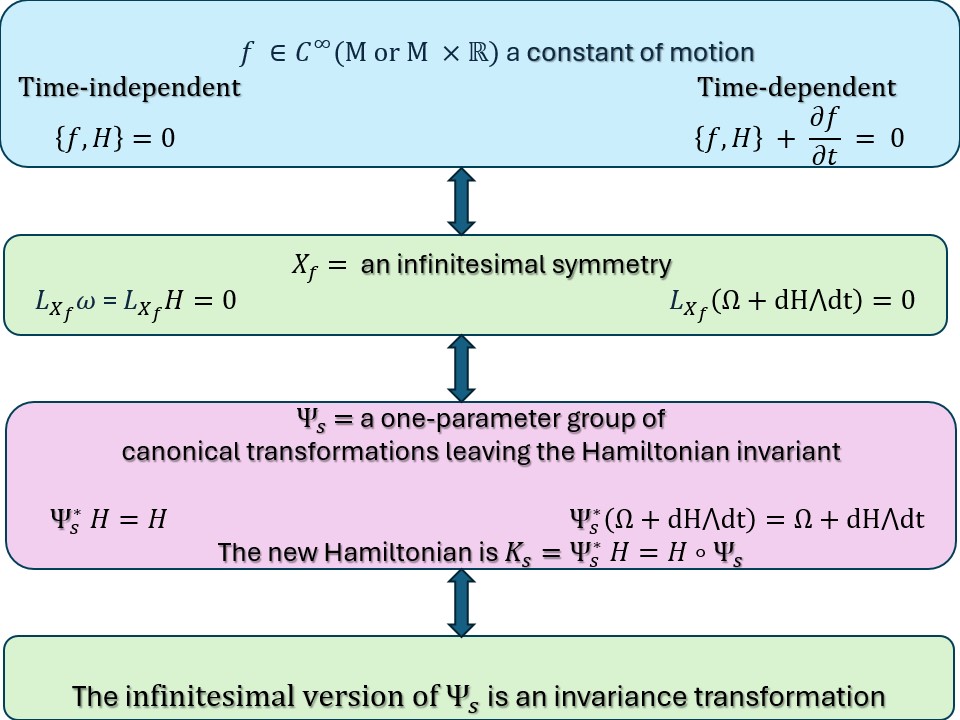}
\caption{Equivalence between the existence of, a constant of motion, an infinitesimal symmetry, a one-parameter group of canonical transformations leaving the Hamiltonian invariant and, an infinitesimal invariance transformation.}
\label{fig2}
\end{figure}

\section{Examples}
\label{secexample}

\subsection{A one-parameter group of canonical transformations for the isotropic harmonic oscillator}
Let us consider the physically relevant isotropic harmonic oscillator in 2D. The Hamiltonian function in canonical Cartesian coordinates $(x,y,p_{x},p_{y})$ is 
\begin{equation}
H=\frac{1}{2\,m}(p_{x}^{2}+p_{y}^{2})+\frac{m\, w^{2}(x^{2} +y^{2})}{2},
\end{equation}
$m$ is the mass and $w$ the angular frequency. The angular momentum
\begin{equation}
f = x\,p_{y} - y\,p_{x} \ ,
\end{equation}
is a constant of motion, so the associated Hamiltonian vector field $X_{f}$ is an infinitesimal symmetry, indeed,
\begin{equation}
X_{f} = -y\frac{\partial}{\partial x} - p_{y}\frac{\partial}{\partial p_{x}} + x\frac{\partial}{\partial y} + p_{x}\frac{\partial}{\partial p_{y}},
\end{equation}
\begin{equation}
\begin{split}
L_{X_{f}}\omega &= d(X_{f}\lrcorner X_{f})\\
&= d((dx\wedge dp_{x}+dy\wedge dp_{y})\lrcorner X_{f})\\
&= d(-ydp_{x}+p_{y}dx+xdp_{y}-p_{x}dy)\\
&= -dy\wedge dp_{x}+dp_{y}\wedge dx+dx\wedge dp_{y}-dp_{x}\wedge dy\\
&= 0 \ ,
\end{split}
\end{equation}
and
\begin{equation}
L_{X_{f}}H=X_{f}H=-y\,m\,w^{2}\,x-p_{y}\frac{p_{x}}{m}+x\,m\,w^{2}\,y+p_{x}\frac{p_{y}}{m}=0.
\end{equation}
Since $X_{f}$ is an infinitesimal symmetry then the flow of $X_{f}$ is a one-parameter group of time-independent canonical transformations leaving the Hamiltonian $H$ invariant, see below. The integral curves of $X_{f}$ satisfy the Hamilton equations for $f$,
\begin{equation}\label{eq76}
\frac{\partial f}{\partial x} = -\dot{p_{x}}, \quad \frac{\partial f}{\partial y} = -\dot{p_{y}}, \quad  \frac{\partial f}{\partial p_{x}}=\dot{x}, \quad \frac{\partial f}{\partial p_{y}} = \dot{y}, 
\end{equation} 
i.e., 
\begin{equation}
\label{eqhamiltonangular}
\begin{array}{c}
p_{y}= -\dot{p_{x}} \\
p_{x}= \dot{p_{y}} \\
y= - \dot{x} \\
x=  \dot{y}\ .
\end{array}
\end{equation}
The general solution of system (\ref{eqhamiltonangular}) is
\begin{equation}\label{eq79}
\begin{split}
y &= c_{1}\cos s + c_{2} \sin s \\
x &= -c_{1}\sin s + c_{2}\cos s \\
p_{y} &= c_{3}\cos s + c_{4}\sin s \\
p_{x} &= -c_{3}\sin s + c_{4}\cos s\ ,
\end{split}
\end{equation}
with $c_{1},c_{2},c_{3},c_{4}$ constants of integration. Since the flow $\Psi(p,s)$ of $X_{f}$ must obey the condition $\Psi(p,0)=p$, we obtain that
\begin{equation}
\Psi(x,y,p_{x},p_{y},s) =(-y\sin s + x\cos s,\,y\cos s + x\sin s,\,p_{x}\cos s - p_{y}\sin s,\,p_{y}\cos s + p_{x}\sin s).
\end{equation}
Now, we can see explicitly that
\begin{equation}
\begin{split}
\Psi_{s}^{*}\omega &= \Psi_{s}^{*}(dx\wedge dp_{x}+dy\wedge dp_{y})\\
&= d(-y\sin s + x\cos s)\wedge d(p_{x}\cos s - p_{y}\sin s)+d(y\cos s + x\sin s)\wedge d(p_{y}\cos s + p_{x}\sin s)\\
&= dx\wedge dp_{x}+dy\wedge dp_{y}\\
&= \omega \ ,
\end{split}
\end{equation}
and
\begin{equation}
\begin{split}
\Psi_{s}^{*}H &= H\circ \Psi_{s}\\
&= H(-y\sin s + x\cos s,y\cos s + x\sin s,\,p_{x}\cos s - p_{y}\sin s,\,p_{y}\cos s + p_{x}\sin s)\\
&= \frac{1}{2\,m}((p_{x}\cos s - p_{y}\sin s)^{2}+(p_{y}\cos s + p_{x}\sin s)^{2})+\frac{m\,w^{2}((-y\sin s + x\cos s)^{2} +(y\cos s + x\sin s)^{2})}{2}\\
&= \frac{1}{2\,m}(p_{x}^{2}+p_{y}^{2})\ + \ \frac{m\,w^{2}(x^{2} +y^{2})}{2}\\
&= H.
\end{split}
\end{equation}

\subsection{A one-parameter group of canonical transformations for the Smorodinsky-Winternitz system}
For this example we consider the superintegrable Smorodinsky-Winternitz system \cite{winternitz1967symmetry,bonatsos1994deformed}. The Hamiltonian function is given by
\begin{equation}
H = \frac{(p_{x}^{2} + p_{y}^{2})}{2} + k\,(x^{2} + y^{2}) + \frac{c}{x^{2}}\ ,
\end{equation}
$k,c$ are constants. The function
\begin{equation}
T \ = \  p_{y}^{2} + 2\,k\,y^{2} \ ,
\end{equation}
is a constant of motion, responsible for the separation of variables in Cartesian coordinates. Therefore the Hamiltonian vector field
\begin{equation}
X_{T} = 2p_{y}\frac{\partial}{\partial y} - 4ky\frac{\partial}{\partial p_{y}} \ ,
\end{equation}
is a infinitesimal symmetry and the flow
\begin{equation}
\Psi(x, y, p_{x}, p_{y}, s) = \left(x, \frac{p_{y}\sin(\sqrt{8k}s)}{\sqrt{2k}} + y\cos(\sqrt{8k}s),p_{x}, p_{y}\cos(\sqrt{8k}s) - \sqrt{2k}y\sin(\sqrt{8k}s) \right) \ ,
\end{equation}
is a one-parameter group of canonical transformations leaving the Hamiltonian invariant. Indeed,
\begin{equation}
\begin{split}
\Psi^{*}\omega= &= \left(\frac{\sin(\sqrt{8k}s)}{\sqrt{2k}}dp_{y} + \cos(\sqrt{8k}s)dy \right) \wedge \\&\quad \left( \cos(\sqrt{8k}s)dp_{y} - \sqrt{2k}\sin(\sqrt{8k}s)dy \right)+ dx \wedge dp_{x}, \\
&= (-\sin^{2}(\sqrt{8k}s)dp_{y} \wedge dy + \cos^{2}(\sqrt{8k}s) dy \wedge dp_{y}) + \\
&\quad + dx \wedge dp_{x}\\
&= dx \wedge dp_{x} + dy \wedge dp_{y}\\
&= \omega \ ,
\end{split}
\end{equation}
and
\begin{equation}
\begin{split}
=H(\Psi(x,y,p_{x},p_{y},s)) &= \frac{1}{2}\left(p_{x}^{2} + \left(p_{y}\cos(\sqrt{8k}s) - \sqrt{2k}y\sin(\sqrt{8k}s)\right)^{2}\right) + \\
&\quad + k\left(x^{2} + \left(\frac{p_{y}\sin(\sqrt{8k}s)}{\sqrt{2k}} + y\cos(\sqrt{8k}s)\right)^{2}\right) + \frac{c}{x^{2}} \\
&= \frac{(p_{x}^{2} + p_{y}^{2})}{2} \ + \  k\,(x^{2} + y^{2}) \  + \  \frac{c}{x^{2}}\\
&= H(x,y,p_{x},p_{y}).
\end{split}
\end{equation}

\subsection{A one-parameter group of invariance  time-dependent canonical transformations}
In \cite{torres2018introduction} is is shown that the time-dependent Hamiltonian
\begin{equation}
H=\frac{p}{2\,m}\ - \ k\,t\,q,
\end{equation}
with $k$ a real constant, is invariant under the one-parameter group of time-dependent coordinate canonical transformations generated by the time-dependent constant of motion 
\begin{equation}
g=q-\frac{t\,p}{m}+k\frac{t^{3}}{3\,m}. 
\end{equation}
We can verify that the Hamiltonian vector field $X_{g}$ of $g$ is an infinitesimal symmetry of the system.
Indeed,
\begin{equation}
X_{g}=-\frac{t}{m}\frac{\partial}{\partial q}-\frac{\partial}{\partial p},
\end{equation}
so we have
\begin{equation}
\begin{split}
L_{X_{g}}(\Omega+dH\wedge dt) &= d(\Omega\lrcorner X_{g}+´dH\wedge dt\lrcorner X_{g})\\
&= d(dq\wedge dp\lrcorner(-\frac{t}{m}\frac{\partial}{\partial q}-\frac{\partial}{\partial p})-ktdq\wedge dt\lrcorner(-\frac{t}{m}\frac{\partial}{\partial q}-\frac{\partial}{\partial p})+\frac{p}{m}dp\wedge dt\lrcorner(-\frac{t}{m}\frac{\partial}{\partial q}-\frac{\partial}{\partial p}))\\
&= d(-\frac{t}{m}dp+dq+k\frac{t^{2}}{m}dt+\frac{p}{m}dt)\\
&= 0.
\end{split}
\end{equation}
The flow $\Psi$ of $g$ is given by
\begin{equation}
\Psi(q,p,t,s)=(q-\frac{ts}{m},p-s,t),
\end{equation}
which coincides with the expression of the one-parameter group of coordinate canonical transformations generated by $g$ given in \cite{torres2018introduction} (according to equation (\ref{eqgengroup})) by
\begin{equation}
Q=q-\frac{ts}{m}\ , \quad P=p-s.
\end{equation}

\section{Conclusions}
Summarizing, a complete description of canonical transformations under the geometric formalism of Hamiltonian systems (on symplectic manifolds as well as on the extended phase space with structure of cosymplectic manifold) is presented. It was demonstrated that the local aspects of the geometric formalism coincide exactly with the coordinate-based classical description. Also, some important well-known results have been geometrically formalized, for instance, the relation between constants of motion and invariance transformations through the identification of groups of invariance transformations as flows of vector fields that are symmetries of the system. In particular, it was shown that these vector fields are in clear correspondence via the Noether theorem with the constants of motion.

The geometric description of canonical transformations described in this paper allows both a formal treatment of groups of canonical transformations as well as their geometric visualization. This is possible thanks to the introduction of vector fields representing symmetries. So we believe that the present results can provide insights, from a physical and a mathematical perspective, into the modern theory of Hamiltonian Mechanics.

\section*{Acknowledgments}

The author R. Azuaje wishes to thank CONAHCYT (Mexico) for the financial support through a postdoctoral fellowship in the program Estancias Posdoctorales por México 2022. A.M. Escobar Ruiz would like to thank the support from Consejo Nacional de Humanidades, Ciencias y Tecnologías (CONAHCyT) of Mexico under Grant CF-2023-I-1496 and from UAM research grant 2024-CPIR-0.

The authors thank Julio Gordiano for performing some of the computations in the examples.

\bibliography{refs} 

\providecommand{\noopsort}[1]{}\providecommand{\singleletter}[1]{#1}%
\begin{thebibliography}{10}

\bibitem{landau1982mechanics}
L.~D. Landau and E.~M. Lifshitz.
\newblock {\em Mechanics: Volume 1}.
\newblock Elsevier Science, 1982.

\bibitem{GPS2002}
H.~Goldstein, C.~Poole, and J.~Safko.
\newblock {\em Classical Mechanics, third edition}.
\newblock Addison-Wesley, 2002.

\bibitem{Calkin96}
M.~G. Calkin.
\newblock {\em Lagrangian and {H}amiltonian Mechanics}.
\newblock World Scientific Publishing, 1996.

\bibitem{Arnold78}
V.~I. Arnold.
\newblock {\em Mathematical Methods of Classical Mechanics}.
\newblock Springer New York, 1978.

\bibitem{AMRC2008}
R.~Abraham and J.~E. Marsden.
\newblock {\em Foundations of Mechanics}.
\newblock American Mathematical Soc. No. 364, 2008.

\bibitem{LR89}
M.~de~León and P.~R. Rodrigues.
\newblock {\em Methods of Differential Geometry in Analytical Mechanics}.
\newblock North-Holland Mathematics Studies, 1989.

\bibitem{Torres2020}
G.~F.~Torres del Castillo.
\newblock {\em Differentiable Manifolds: A Theoretical Physics Approach, second
  edition}.
\newblock Birkh\"{a}user Basel, 2020.

\bibitem{Lee2012}
J.~Lee.
\newblock {\em Introduction to Smooth Manifolds, second edition}.
\newblock Springer, 2012.

\bibitem{marsden2013introduction}
J.~E. Marsden and T.~S. Ratiu.
\newblock {\em Introduction to mechanics and symmetry: a basic exposition of
  classical mechanical systems}, volume~17.
\newblock Springer Science \& Business Media, 2013.

\bibitem{AKN2006}
V.~I. Arnold, V.~V. Kozlov, and A.~I. Neishtadt.
\newblock {\em Mathematical Aspects of Classical and Celestial Mechanics, third
  edition}.
\newblock Springer, 2006.

\bibitem{CFR2013}
J.~F. Cariñena, F.~Falceto, and M.~F. Rañada.
\newblock Canonoid transformations and master symmetries.
\newblock {\em J. Geo. Mech}, 5:151--166, 2013.

\bibitem{struckmeier2005hamiltonian}
J.~Struckmeier.
\newblock Hamiltonian dynamics on the symplectic extended phase space for
  autonomous and non-autonomous systems.
\newblock {\em J. Phys. A: Math. Gen.}, 38(6):1257, 2005.

\bibitem{AE2023}
R.~Azuaje and A.~M. Escobar-Ruiz.
\newblock Canonical and canonoid transformations for {H}amiltonian systems on
  (co)symplectic and (co)contact manifolds.
\newblock {\em J. Math. Phys.}, 64(3):033501, 2023.

\bibitem{azuaje2023scaling}
R.~Azuaje and A.~Bravetti.
\newblock Scaling symmetries and canonoid transformations in hamiltonian
  systems.
\newblock {\em Int. J. Geom. Methods Mod. Phys.}, 21(04):2450077, 2024.

\bibitem{CLL92}
F.~Cantrijn, M.~de~León, and E.~A. Lacomba.
\newblock Gradient vector fields on cosymplectic manifolds.
\newblock {\em J. Phys. A: Math. Gen.}, 25:175--188, 1992.

\bibitem{LS2017}
M.~de~León and C.~Sardon.
\newblock Cosymplectic and contact structures for time-dependent and
  dissipative {H}amiltonian systems.
\newblock {\em J. Phys. A: Math. Theor.}, 50:255205, 2017.

\bibitem{torres2018introduction}
G.~F.~Torres del Castillo.
\newblock {\em An Introduction to Hamiltonian Mechanics}.
\newblock Springer, 2018.

\bibitem{KS2011}
Y.~Kosmann-Schwarzbach and B.~E. Schwarzbach.
\newblock {\em The {N}oether Theorems, Invariance and Conservation Laws in the
  Twentieth Century}.
\newblock Springer New York, 2011.

\bibitem{BG2021}
A.~Bravetti and A.~Garcia-Chung.
\newblock A geometric approach to the generalized {N}oether theorem.
\newblock {\em J. Phys. A: Math. Theor.}, 54:095205, 2023.

\bibitem{winternitz1967symmetry}
P.~Winternitz, J.~A. Smorodisky, M.~Uhlir, and I~Fris.
\newblock On symmetry groups in classical and quantum mechanics. jadernaja fiz.
  4, 625--635 (1967).
\newblock {\em English transl.: Soviet J. Nucl. Phys}, 4:444--450, 1967.

\bibitem{bonatsos1994deformed}
D.~Bonatsos, C.~Daskaloyannis, and K.~Kokkotas.
\newblock Deformed oscillator algebras for two-dimensional quantum
  superintegrable systems.
\newblock {\em Phys. Rev. A}, 50(5):3700, 1994.

\end{thebibliography}
\bibliographystyle{unsrt} 

\end{document}